\documentclass[twocolumn,letterpaper,10pt]{article}
\usepackage{iasted}
\usepackage{times}
\usepackage[dvips]{graphicx}
\usepackage{url}


\makeatletter
\def\url@leostyle{%
  \@ifundefined{selectfont}{\def\UrlFont{\sf}}{%
    \def\UrlFont{\small\ttfamily}}\Url@do
}
\makeatother

\urlstyle{leo}

\newcommand{\MapXMLelementToTableName}{
   \sigma
}
\newcommand{\MapXMLattributeToRelationalAttribute}{
   \theta
}
\newcommand{\MapLeafXMLElementToRelationalAttribute}{
   \delta
}

\newtheorem{DummyLemma}{DummyLemma}[section]
\newtheorem{theorem}[DummyLemma]{Theorem}
\newtheorem{lemma}[DummyLemma]{Lemma}
\newtheorem{ExamplE}[DummyLemma]{\em Example}
\newtheorem{DefinitioN}[DummyLemma]{\em Definition}

\newenvironment{definition}[1]
               {\begin{DefinitioN}[#1] \rm}{\end{DefinitioN}}

\newenvironment{proof}
               {\noindent {\bf Proof:} \rm}

\pagestyle{empty}

\begin{document}

\title{Mapping XML Data to Relational Data: A DOM-Based Approach}

\author{Mustafa Atay, Yezhou Sun, Dapeng Liu, Shiyong Lu, Farshad Fotouhi\\
Department of Computer Science\\
Wayne State University, Detroit, MI 48202 \\
\{matay, sunny, dliu, shiyong, fotouhi\}@cs.wayne.edu\\
}

\date{}
\maketitle
\thispagestyle{empty}

\noindent {\bf\normalsize ABSTRACT}\newline {XML has emerged as the
standard for representing and exchanging data on the World Wide Web.
It is critical to have efficient mechanisms to store and query XML
data to exploit the full power of this new technology. Several
researchers have proposed to use relational databases to store and
query XML data. While several algorithms of schema mapping and query
mapping have been proposed, the problem of mapping XML data to
relational data, i.e., mapping an XML INSERT statement to a sequence
of SQL INSERT statements, has not been addressed thoroughly in the
literature. In this paper, we propose an efficient linear algorithm
for mapping XML data to relational data. This algorithm is based on
our previous proposed inlining algorithm for mapping DTDs to
relational schemas and can be easily adapted to other inlining
algorithms.}
 \vspace{2ex}

 \noindent
{\bf\normalsize KEY WORDS}\newline {XML, schema mapping, data
mapping, RDBMS.}
\vspace{-0.125in}
\section{Introduction}
XML is rapidly emerging as the de facto standard for representing and exchanging data over the World Wide Web. The increasing amount of XML documents requires the need to store and query XML documents efficiently. Researchers have proposed using relational databases to store and query XML documents ~\cite{e3}\cite{e5}\cite{e6}~\cite{e1}~\cite{e2}\cite{e14}. The main challenge of this approach is that, one needs to resolve the conflict between the hierarchical nature of XML data model and the two-level nature of relational data model. The following problems need to be addressed in order to employ relational databases to store and query XML data:
\begin{itemize}
\item
{\it Schema mapping}, which generates the corresponding relational schema from an input DTD. Instead of generating a relational table for each XML element, typically, several XML elements are combined into one table to reduce the number of generated tables and the cost of join operations. Representatives of these algorithms include the shared-inlining algorithm \cite{e11} and its variation \cite{e7}.
\item
{\it Data mapping}, which inserts XML data as relational tuples into the target database. Based on the relational schema generated in schema mapping, input XML documents are shredded and composed into relational tuples and inserted into the relational database. This requires that an XML INSERT statement be translated into a sequence of SQL INSERT statements, which are executed against the target database to load the data.
\item
{\it Query mapping}, which translates XML queries into SQL queries. Each XML query over XML documents needs to be translated into a sequence of SQL queries to be executed against the relational database.
\item
{\it Reverse data mapping}, which publishes XML data from relational data. XML queries are answered by executing the corresponding SQL queries which return relational data. These relational data need to be reformatted into XML data conforming to the structure imposed by the input XML query.
\end{itemize}

Numerous researchers have addressed the problems of schema mapping ~\cite{e7}~\cite{e11}~\cite{e14}, query mapping ~\cite{e2}~\cite{e11}~\cite{e12} and reverse data mapping ~\cite{e2}~\cite{e4}~\cite{e10}. However, the problem of data mapping is mostly ignored in the literature. In this paper, we address the data mapping problem. We propose an efficient linear algorithm to perform data mapping. This algorithm is based on our proposed schema mapping algorithm ~\cite{e7} (referred to by DTDMap afterwards) but can be easily adapted to other inlining algorithms such as the standard shared-inlining algorithm ~\cite{e11}

$Organization.$ The rest of the paper is organized as follows. Section 2 presents an overview of related work. Section 3 gives a brief overview of our schema mapping algorithm DTDMap. Section 4 identifies the data model and describes our proposed data mapping algorithm XInsert. Section 5 presents the experimental results of applying our data mapping algorithm to the two schema mapping algorithms, DTDMap and shared-inlining. Finally, Section 6 concludes the paper and points out some potential future work.
\vspace{-0.15in}
\section{Related Work}
Different approaches have been proposed for storing and querying XML
data. One approach is to develop native XML databases that support
XML data model and query languages directly. This includes Software
AG's Tamino XML Server ~\cite{e19}, IXIA's TEXTML Server ~\cite{e20}
and Sonic Software's eXtensible Information Server ~\cite{e18}
(formerly eXcelon's XIS). The advantage of this approach is that XML
data can be stored and retrieved in their original formats and no
additional mappings or translations are needed. Furthermore, most
native XML databases have the ability to perform sophisticated
full-text searches including full thesaurus support, word stubbing,
and proximity searches. The disadvantage is that, due to the
document-centric nature of these databases, complex searches or
aggregations might be cumbersome.

The second approach is to use the XML enabled commercial database
systems. Currently, most major databases such as SQL Server
~\cite{e17}, Oracle ~\cite{e8} and DB2  ~\cite{e16} provide
mechanisms to store and query XML data by extending the existing
data model with an additional XML data type so that a column of this
data type can be defined and used to store XML data. In addition, a
set of methods is associated with this new XML data type to process,
manipulate and query stored XML data.

The third approach is to use existing mature technologies such as
relational DBMSs or object-oriented DBMSs to store and query XML
data ~\cite{e3}~\cite{e7}~\cite{e11}. The main challenge of this
approach is that, one needs to resolve the conflict between XML data
model and the target data model. This usually requires various
mappings (e.g., schema mapping, data mapping and query mapping) to
be performed between the two data models. Therefore, the main issue
is to develop efficient algorithms to perform these mappings.

Different approaches have their pros and cons and the choice has to
be made based on the requirement of the application at hand and the
advancement of these approaches at the time that the choice has to
be made. Readers are referred to a recent evaluation study of
alternative XML storage strategies ~\cite{e13} for more details.

\vspace{-0.15in}
\section{An Overview of Schema Mapping Algorithm DTDMap}
The data mapping algorithm proposed in this paper is based on our schema mapping algorithm, DTDMap, proposed in ~\cite{e7}. In this section, we give a brief overview of DTDMap algorithm.

One approach for mapping DTDs to relational schema is mapping each node in the DTD to a table. Although this approach is easy to understand and implement, it has its own drawbacks. This approach results in many tables in the corresponding relational schema. When you query this database, you need to join several tables which causes query processing to be inefficient.

Alternatively, in our DTDMap algorithm, we suggest combining every single child node in a DTD, to its parent node, if it appears in its parent at most once. We call this operation $inlining$. A node is said to be $inlinable$ if it has exactly one parent node, and, its cardinality is not equal to either ``*'' or ``+''. An inlinable node is mapped with its parent node into the same table. Hence, we reduce the number of tables and consequently the average number of joins for queries.

DTDMap algorithm takes a DTD as input and produces a relational schema as output. In addition, it outputs mapping functions between XML elements and attributes in the input DTD, and, corresponding tables and relational attributes in the output schema.

\begin{figure}[ht]
\begin{center}
\includegraphics[scale=.75]{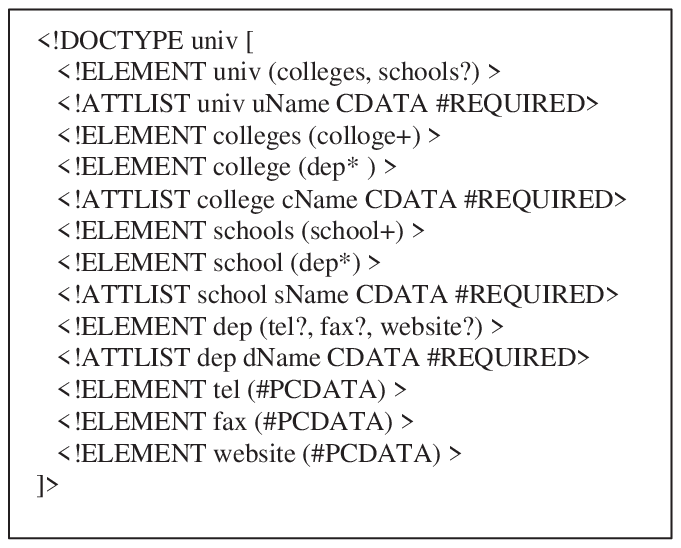}
\caption{Sample XML DTD "univ.dtd"}
\label{dtd}
\end{center}
\end{figure}

\vspace{-0.2in}

\begin{figure}[ht]
\begin{center}
\includegraphics[scale=.70]{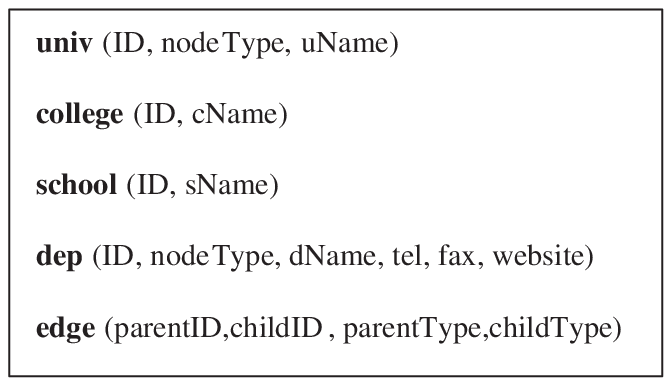}
\caption{Relational schema for univ.dtd}
\label{schema}
\end{center}
\end{figure}

\vspace{-0.2in}

\begin{figure}[ht]
\begin{center}
\includegraphics[scale=.75]{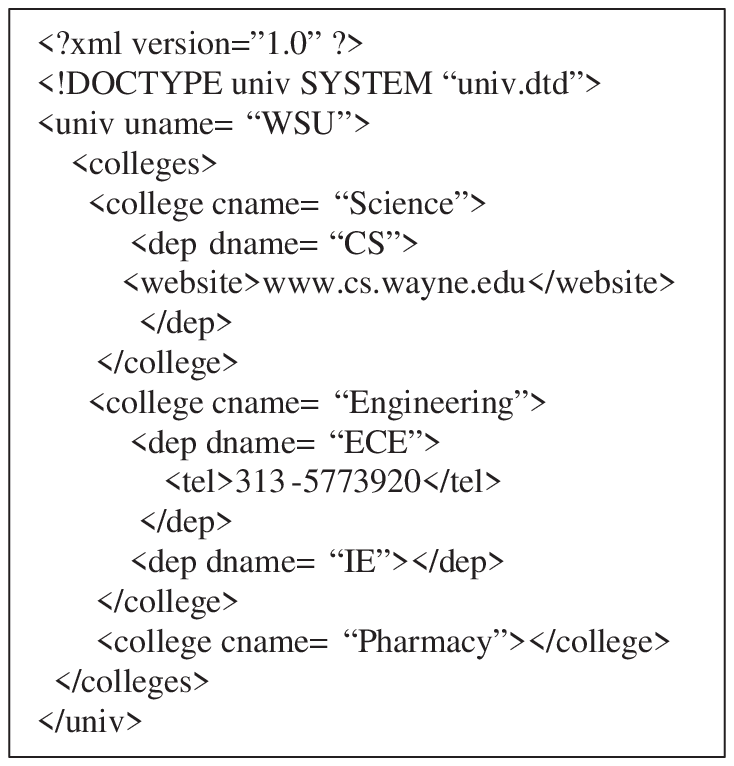}
\caption{Sample XML document univ.xml}
\label{instance}
\end{center}
\end{figure}

\vspace{-0.2in}

\begin{figure*}[t]
\begin{center}
\includegraphics[scale=.80]{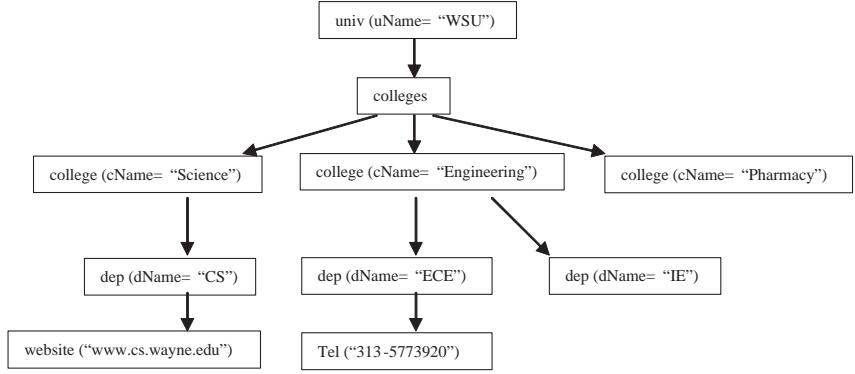}
\caption{DOMTree for univ.xml}
\label{domtree}
\end{center}
\end{figure*}

As an example, given the DTD in Figure 1, DTDMap generates the relational schema shown in Figure \ref{schema} and the following three mapping functions $\sigma$, $\theta$, and $\delta$:
\begin{itemize}
\item
$\sigma$(e) maps an element type to a corresponding relational table. Therefore, $\sigma$(univ) = univ, $\sigma$(uName) = univ, etc.
\item
$\theta$(a) maps an XML attribute to a relational attribute. Therefore, $\theta$(uName) = uName, $\theta$(dName) = dName, etc.
\item
$\delta$(e) maps a leaf element to a relational attribute. Therefore, $\delta$(tel) = tel, $\delta$(fax) = fax, etc.
\end{itemize}

In the above examples, some mappings happened to be identity
mappings. This is not always the case in practice; they can be
general enough and used to resolve name conflict. For example, a
mapping $\theta$(date) = BirthDate can avoid the use of ``date'' as
a column name for a table since ``date'' might be a keyword of the
target database and cannot be used as the name of a relational
attribute.
\vspace{-0.15in}
\section{Data Mapping}
The data model we will use for the data mapping algorithm is based
on the W3C's Document Object Model ~\cite{e21}, but it has some
distinctions from DOM specification. In contrast to traditional DOM
tree, the XML element DOM tree, which we propose here, does not
consider XML values as nodes but consider them as data fields of XML
element nodes. This distinction is only for the convenience of the
presentation; thus the algorithm proposed in this paper can be
implemented directly on the standard DOM model. The details of XML
element DOM tree are given in Definition 4.1.

\begin{definition}{DOMTree}
We model an XML element document $D$ as an XML element DOM tree
(DOMTree) $T$, in which nodes represent XML elements and edges
represent parent-child relationships between XML elements. For each
XML element node $e$ in $T$, we use the following notations:
\begin{itemize}
\item
$e.name$, the name of the XML element.
\item
$e.parent$, the parent node of $e$, and $e.parent = NULL$ if $e$ is the root node of $T$.
\item
$e.children$, the set of children nodes of $e$, and $e.children = \Phi$ if $e$ is a leaf of $T$.
We also denote the children of $e$ by $e.c_1, \cdots, e.c_m$.
\item
$e.attributes$, the set of XML attributes of $e$.
We also denote the attributes of $e$ by $e.a_1, \cdots, e.a_n$, and the names and values of
these attributes by $e.a_i.name$ and $e.a_i.value$ respectively ($i=1, \cdots, n$).
\item
$e.value$, the value of $e$, and $e.value = NULL$ if $e$ is a non-leaf node.
\end{itemize}
\end{definition}

\begin{figure*}[ht]
\centering
\scriptsize{
\framebox[4.6in][t]{
\begin{minipage}[t]{4.5in}
\begin{tabbing}
\hspace*{0.2in}\=\hspace*{0.2in}\=\hspace*{0.2in}\=\hspace*{0.2in}\=\hspace*{0.2in}\=\hspace*{0.2in}\=\\
1 {\bf Algorithm} XInsert\\
2 {\bf Input:} DOMTree $T$, DTDGraph $G$, Schema mappings $\MapXMLelementToTableName, \MapXMLattributeToRelationalAttribute, \MapLeafXMLElementToRelationalAttribute$ \\
3 {\bf Output:} elements in $T$ are inserted into the relational database\\
4 {\bf Begin}\\
5    \> Queue q := EmptyQueue(), T.root.EID.value := genID(), q.enqueue(T.root)\\
6   \> {\bf While} q.isNotEmpty() {\bf do}\\
7    \> \> e := q.dequeue()\\
8   \> \> Table tb:= $\MapXMLelementToTableName(e)$\\
9   \> \> create a tuple $tp$ of table $tb$ with all attributes initialized to NULL \\
10   \> \> tp.ID = e.EID.value \\
11    \> \> {\bf If} nodeType $\in$ tb.AttributeSet {\bf then}
            tp.nodetype = e.name
      {\bf End If}\\
12    \> \> {\bf For} each XML attribute $e.a_i$ of $e$ {\bf do}
            $tp.\MapXMLattributeToRelationalAttribute(e.a_i) := e.a_i.value$
         {\bf End For}\\
13    \> \> {\bf If} e is a leaf {\bf then}\\
14    \> \> \> $tp.\MapLeafXMLElementToRelationalAttribute(e) := e.value$\\
15    \> \> {\bf Else} /* $e$ is not a leaf */\\
16    \> \> \> Queue r := EmptyQueue()\\
17    \> \> \> {\bf For} each child $e.c_i$ of $e$ {\bf do}
                 $r.enqueue(e.c_i)$
             {\bf End For}\\
18    \> \> \> {\bf While} r.isNotEmpty() {\bf do}\\
19    \> \> \> \> f := r.dequeue()\\
20    \> \> \> \> {\bf If} $f$ is not inlinable to $e$ {\bf then} \\
21    \> \> \> \> \> $f.EID.value := genID()$\\
22    \> \> \> \> \> $f.parentEID.value := e.EID.value$\\
23    \> \> \> \> \> $f.parentNodeType.value := e.name$\\
24    \> \> \> \> \> {\bf If}  {\it l(type(f.parent), type(f), G)} $<$$>$ `*' {\bf then}\\
25    \> \> \> \> \> \>  $tp.\MapXMLattributeToRelationalAttribute(f.EID) := f.EID.value$\\
26    \> \> \> \> \> {\bf End If}\\
27    \> \> \> \> \> q.enqueue(f)\\
28    \> \> \> \> {\bf Else} /* $f$ inlinable to $e$ */ \\
29    \> \> \> \> \> {\bf for} each XML attribute $f.a_i$ of $f$ {\bf do}
                   $tp.\MapXMLattributeToRelationalAttribute(f.a_i) := f.a_i.value$
                {\bf End For}\\
30    \> \> \> \> \> {\bf If} f is a leaf {\bf then}\\
31    \> \> \> \> \> \> $tp.\MapLeafXMLElementToRelationalAttribute(f) := f.value$\\
32    \> \> \> \> \> {\bf Else} /* $f$ is not a leaf */\\
33    \> \> \> \> \> \>  {\bf For} each child $f.c_i$ of $f$ {\bf do} $r.enqueue(f.c_i)$ {\bf End For}\\
34    \> \> \> \> \> {\bf End If}\\
35    \> \> \> \> {\bf End If}\\
36    \> \> \> {\bf End While}\\
37    \> \> {\bf End If}\\
38    \> \> Insert tuple $tp$ into table $tb$\\
39    \> \> {\bf If}  {\it l(type(e.parent), type(e), G)} $==$ `*' {\bf then}\\
40    \> \> \> insert $<e.parentEID.value, e.EID.value, e.parentNodeType.value, e.name>$ \\
41    \> \> \> into table $edge$\\
42    \> \> {\bf End If}\\
43    \> {\bf End While}\\
44 {\bf End Algorithm}
\end{tabbing}
\end{minipage}
}
}
\caption{The algorithm for mapping XML data to relational data}
\label{algorithm}
\end{figure*}

An XML element DOM tree for the XML document given in Figure \ref{instance} is illustrated in Figure \ref{domtree}. Each node $e$ is labeled by $e.name(e.value,e.a_1.name= e.a_1.value,\cdots,e.a_n.name= e.a_n.value)$ and $e.value$ is omitted when $e$ is non-leaf node where $e.value$=NULL.

Our data mapping algorithm XInsert is based on the notion of inlinable elements introduced in the schema mapping phase. However we cannot find out whether a XML element instance is inlinable from a DOMTree itself. We need to refer to the DTD information.

Figure \ref{algorithm} describes the algorithm XInsert which inserts an XML document into the relational database whose schema is previously generated from the input XML DTD.

Given a DTD graph $G$ and nodes $n_1$ and $n_2$ in $G$, $l(n1,n2,G)$ denotes the label of edge between $n_1$ and $n_2$. Given an element instance $e$, $type(e)$ denotes the corresponding element node in $G$.

We define a field EID  which is associated with each element instance $e$ in the algorithm. EID is a unique value and it is generated for each non-inlinable element when it is first visited. We introduce $parentEID$ and $parentNodeType$ fields in the algorithm to keep the parent-child relationship between the elements.

XInsert algorithm is driven by two nested While loops. The outer
While loop maintains a Queue, $q$, to process the non-inlinable XML
elements. It obtains the typical information of the tuple, $t$,
corresponding to a non-inlinable element, $e$, such as ID, nodeType,
XML attribute values and the content(lines 10-14). Finally, it
inserts the tuple $t$ into the table $\sigma(e)$ (line 38). If
$type(e)$ is a *-element, then it inserts the tuple, $t_e$ into the
$edge$ table to store the parent-child relationship (lines 39-42).

If $e$ is not a leaf element then the inner While loop is performed to search for inlinable descendants of $e$. It maintains a Queue, $r$, to  process the descendants of $e$. Firstly, it determines the inlinable descendants of $e$ and retrieve their data to complete the context information for the tuple $t$ (lines 28-35). Secondly, it keeps the parent-child information of the non-inlinable descendants of $e$ through the fields $parentEID$ and $parentNodeType$ (lines 20-23). Lastly, it introduces a foreign key in the tuple $t$ if there exists a shared descendant of $e$ (lines 24-26).  The algorithm ends when there are no more elements to be processed in queues, $q$ and $r$.

\begin{table*}[htbp]
\scriptsize{
\caption{The state of the database after univ.xml is stored}
\makebox[1cm][t]{}\hfill
\begin{minipage}[t]{3.5cm}
\begin{tabular}[t]{|l|l|l|}\hline
\multicolumn{3}{|l|}{\bf Univ}\\ \hline
\underline{{\it ID}} & {\it nodeType} & {\it uName} \\ \hline
1 & univ & WSU \\ \hline
\end{tabular}
\end{minipage}
\hfill
\begin{minipage}[t]{5.5cm}
\begin{tabular}[t]{|l|l|l|l|}\hline
\multicolumn{4}{|l|}{\bf Edge}\\ \hline
\underline{{\it parentID}} & \underline{{\it childID}} & {\it parentType} & {\it childType} \\ \hline
1 & 2 & univ & college \\ \hline
1 & 3 & univ & college \\ \hline
1 & 4 & univ & college \\ \hline
2 & 5 & college & dep \\ \hline
3 & 6 & college & dep \\ \hline
3 & 7 & college & dep \\ \hline
\end{tabular}
\end{minipage}
\hfill
\begin{minipage}[t]{4cm}
\begin{tabular}[t]{|l|l|}\hline
\multicolumn{2}{|l|}{\bf School}\\ \hline
\underline{{\it ID}} & {\it sName} \\ \hline
\end{tabular}
\end{minipage}
\\
\\
\mbox{}\hfill
\begin{minipage}[t]{8.5cm}
\begin{tabular}[t]{|l|l|l|l|l|l|}\hline
\multicolumn{6}{|l|}{\bf Dep}\\ \hline
\underline{{\it ID}} & {\it nodeType} & {\it dName} & {\it tel} & {\it fax} & {\it website} \\ \hline
5 & dep & CS & null & null & www.cs.wayne.edu \\ \hline
6 & dep & ECE & 313-5773920 & null & null \\ \hline
7 & dep & IE & null & null & null \\ \hline
\end{tabular}
\end{minipage}
\hfill
\begin{minipage}[t]{4.5cm}
\begin{tabular}[t]{|l|l|}\hline
\multicolumn{2}{|l|}{\bf College}\\ \hline \underline{{\it ID}} &
{\it sName} \\ \hline 2 & Science\\ \hline 3 & Engineering\\ \hline
4 & Pharmacy\\ \hline
\end{tabular}
\end{minipage}
\label{database}
}
\end{table*}

To analyze the time complexity of algorithm XInsert, we first present some properties of the algorithm in the following lemmas.

\begin{lemma}
Each non-inlinable element $e$ in DOMTree $T$ is enqueued into Queue $q$ exactly once, and $q$ only contains non-inlinable elements.
\end{lemma}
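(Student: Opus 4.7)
The plan is to inspect every line where $q$ is modified and argue about the nature of what gets enqueued there. Only two operations push onto $q$: line~5 inserts \emph{T.root}, and line~27 inserts an element $f$ dequeued from the inner queue $r$. Both parts of the lemma will follow from a careful reading of these two sites.

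First I would establish that $q$ contains only non-inlinable elements. The root has no parent, so by the definition of inlinability (``exactly one parent node'') the root fails to be inlinable. The enqueue at line~27 lies in the branch guarded at line~20 by the test ``$f$ is not inlinable to $e$,'' so every $f$ pushed there is non-inlinable as well. Hence $q$ never holds an inlinable element.

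Next I would show that each non-inlinable element $f$ in $T$ is enqueued into $q$ exactly once. For uniqueness, note that if $f$ is enqueued at line~27 it must first have been pulled from the inner queue $r$ of some outer iteration that was processing a non-inlinable element $e$. By lines~17 and~33, $r$ is seeded with the children of $e$ and is subsequently extended only by the children of nodes that the inner loop found to be inlinable to $e$; consequently every element that ever lands in $r$ lies in the maximal subtree of descendants of $e$ reachable through chains of inlinable intermediate nodes, and $f$ is in particular the first non-inlinable node on the root-to-$f$ path strictly below $e$. Because $f$ has a unique parent, and therefore a unique nearest non-inlinable proper ancestor, this $e$ is uniquely determined, so line~27 can fire for $f$ at most once. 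For existence I would induct on the number of non-inlinable proper ancestors of $f$: the base case $f = T.\text{root}$ is discharged by line~5; the inductive step takes $e$ to be the nearest non-inlinable proper ancestor of $f$ (well-defined because the root is non-inlinable), invokes the induction hypothesis to conclude that $e$ is eventually dequeued at line~7, and then tracks the chain of inlinable nodes from a child of $e$ down to $f$ inside the inner loop, ending with a push of $f$ at line~27.

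The main obstacle is the sub-argument inside the existence proof: one must verify that the inner while loop actually reaches $f$ when $e$ is processed. This requires a small auxiliary induction along the inlinable chain from $e$ to $f$, using the fact that line~33 re-enqueues the children of every inlinable descendant that the inner loop pops, so the next node on the path toward $f$ is always reinstated in $r$ until $f$ itself is popped and, being non-inlinable, is forwarded to $q$. Once this BFS-style invariant is unfolded, the two halves of the lemma fit together cleanly.
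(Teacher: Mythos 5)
Your proof is correct and follows essentially the same route as the paper's: both parts are established by inspecting the only two enqueue sites (lines 5 and 27), and the ``exactly once'' claim is split into an at-most-once argument resting on the tree structure (unique parent/acyclicity) and an at-least-once argument that tracks how the inner loop propagates through chains of inlinable descendants to reach the next non-inlinable element. You make explicit (via the induction on the number of non-inlinable proper ancestors and the identification of the unique nearest non-inlinable ancestor) what the paper leaves informal, but the decomposition and key ideas are identical.
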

\begin{proof}
The operation of enqueue into $q$ is performed only at line 5 and at line 27. Line 5 enqueues the root element which is non-inlinable. Line 27 is in the body of if-statement (line  20) whose condition indicates the element $f$ to be enqueued into $q$ in line 27 is non-inlinable. Therefore, $q$ only contains non-inlinable elements.
\end{proof}

To demonstrate that each non-inlinable element $e$ is enqueued into $q$ exactly once, we prove in the following that $e$ is enqueued into $q$ at most once and at least once respectively. First, we notice that for each element $e$ that is dequeued from $q$, $e$ is non-inlinable as $q$ only contains non-inlinable elements, the While-statement (line 18 to 36) will enqueue each of $e$'s descendant element $f$ exactly once into Queue $r$, where $f$ satisfies (1) f is $e$'s child (line 17) , or (2) $f$ is inlinable to $e$ (line 33), or (3) $f$ is non-inlinable to $e$ but $f$'s parent is inlinable to $e$ (line 33).  The acyclicity of $T$ implies that each non-inlinable element of $T$ can be enqueued into $q$ at most once. In addition, except the root element, the While-statement (line 18 to 36) will ensure that each non-inlinable element will be enqueued into $q$ at least once in line 27. Finally, the root element is enqueued into $q$ exactly once. Therefore, each non-inlinable element $e$ is enqueued into $q$ exactly once.

\begin{lemma}
Each XML element $e$, except the root element in DOMTree $T$ is enqueued into Queue $r$ exactly once.
\end{lemma}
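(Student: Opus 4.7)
My plan is to prove the lemma by strong induction on the depth of $e$ in the DOMTree $T$, with the root at depth $0$, using the previous lemma (about $q$ containing each non-inlinable element exactly once) as a bridge between the outer queue $q$ and the inner queue $r$. Note that $r$ is reinitialized to the empty queue at line 16 each time a new element is dequeued from $q$, so the phrase ``exactly once'' must be interpreted globally, summed over all incarnations of $r$ created during the run of XInsert.

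The first preparatory step is to isolate the only two enqueue sites into $r$: line 17 fires once for each child of the element just dequeued from $q$, and line 33 fires once for each child of an inlinable node $f$ just dequeued from $r$. Hence $e$ is enqueued via line 17 iff $e.parent$ is dequeued from $q$, which by the previous lemma happens exactly once and only when $e.parent$ is non-inlinable; and $e$ is enqueued via line 33 iff $e.parent$ is dequeued from $r$ and falls into the inlinable branch at line 28.

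The inductive step splits on the inlinability of $e.parent$. If $e.parent$ is non-inlinable (in particular, the base case $e.parent = T.\mathit{root}$, which the excerpt treats as non-inlinable), the previous lemma gives exactly one enqueue into $r$ via line 17, and line 33 cannot fire for $e$ because its parent never satisfies the inlinable test at line 28. If $e.parent$ is inlinable, then $e.parent$ is not the root and has strictly smaller depth, so the induction hypothesis yields exactly one enqueue of $e.parent$ into $r$; since the inner While-loop at line 18 runs until $r$ is drained, $e.parent$ is dequeued exactly once and falls into the inlinable branch, so line 33 enqueues $e$ exactly once, while line 17 cannot fire since the previous lemma confines $q$ to non-inlinable elements.

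The main obstacle I expect is precisely the bookkeeping that the count is global rather than per outer iteration. Two small observations must be made explicit: every element ever placed into any incarnation of $r$ is dequeued within the same outer iteration (immediate from the inner-loop exit condition \emph{r.isNotEmpty()}), and the induction hypothesis applied to $e.parent$ pins down one particular outer iteration in which $e.parent$ is dequeued and thereby produces the single enqueue of $e$ via line 33. Once these two observations are stated, the case analysis above becomes routine.
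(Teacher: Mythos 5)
Your proof is correct, and it takes a genuinely different route from the paper's. The paper argues non-inductively in two halves: first, that for each non-inlinable $e$ dequeued from $q$, the inner loop enqueues into $r$ exactly the descendants $f$ that are children of $e$, or inlinable to $e$, or non-inlinable with an inlinable parent, so every non-root element is enqueued \emph{at least} once; then it appeals to the acyclicity of $T$ to conclude \emph{at most} once. You instead do strong induction on depth, with a case split on the inlinability of $e.parent$, tying each enqueue of $e$ to the unique dequeue of its parent (from $q$ via Lemma~4.2 if the parent is non-inlinable, from $r$ via the induction hypothesis if it is inlinable) and explicitly showing that the other enqueue site cannot fire. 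Your version buys real rigor in the ``at most once'' direction: acyclicity alone does not obviously preclude an element being enqueued once at line~17 and once at line~33, whereas your observation that $q$ holds only non-inlinable elements while line~33 fires only for children of inlinable elements closes exactly that door. Your explicit handling of the per-iteration reinitialization of $r$ at line~16 and the fact that each incarnation of $r$ is fully drained within its outer iteration is also a point the paper leaves implicit. The paper's argument is shorter and conveys the same intuition, but yours is the more airtight of the two.
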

\begin{proof}
Lemma 4.2 implies that each non-inlinable element $e$ is dequeued from $q$ exactly once (line 7), and for each such $e$, the While-statement (line 18 to 36) will enqueue each of $e$'s descendant element $f$ exactly once into Queue $r$, where $f$ satisfies (1) $f$ is $e$'s child (line 17), or (2) $f$ is inlinable to $e$ (line 33), or (3) $f$ is non-inlinable to $e$ but $f$'s parent is inlinable to $e$ (line 33).  Therefore, each element of $T$ except the root element will satisfy one of these three cases for some $e$ and thus will be enqueued into $r$ at least once.  The acyclicity of $T$ implies that each element of $T$ can be enqueued into $r$ at most once. Therefore, each XML element in $T$ is enqueued into $r$ exactly once.
\end{proof}

The following theorem demonstrates that XInsert is an efficient linear algorithm.

\begin{theorem}
The time complexity of algorithm XInsert is O(n) where $n$ is the number of XML elements and attributes in DOMTree $T$.
\end{theorem}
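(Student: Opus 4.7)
The plan is to charge the total running time of \texttt{XInsert} to the work done per dequeue operation, using Lemmas 4.2 and 4.3 to bound how many dequeue operations occur overall. Let $N_e$ denote the number of XML elements in $T$ and $N_a$ the total number of XML attributes across all elements, so that $n = N_e + N_a$.

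First I would account for the outer loop (lines 6--43). By Lemma 4.2, each non-inlinable element is enqueued into $q$ exactly once, so the outer loop executes at most $N_e$ times. Each iteration performs a constant amount of bookkeeping at lines 7--11, 38, and 39--42, plus the attribute loop at line 12 which costs $O(|e.\text{attributes}|)$, plus the leaf assignment at line 14 which is $O(1)$. Summed over all outer iterations, the attribute-loop contribution is bounded by $\sum_{e \text{ non-inlinable}} |e.\text{attributes}|$, which is at most $N_a$.

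Next I would account for the inner loop (lines 18--36). By Lemma 4.3, each non-root element is enqueued into $r$ exactly once across the \emph{entire} execution of the algorithm. Each inner iteration performs $O(1)$ work at lines 19--27 (the non-inlinable branch, which also enqueues into $q$), or $O(1)$ plus the attribute loop at line 29 and the children loop at line 33 (the inlinable branch). Since each child enqueue at line 33 contributes exactly one future $r$-enqueue, these children enqueues are already counted by Lemma 4.3; and the attribute loop at line 29 contributes $O(|f.\text{attributes}|)$, so summing over all inner iterations yields at most $N_a$ total attribute work. The initial enqueues at line 17 are likewise paid for by the children that subsequently appear in $r$.

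Combining these bounds, the total running time is $O(N_e) + O(N_a) = O(n)$, which is what we wanted. The main subtlety, and the place I would be careful to justify explicitly, is the double-counting question for attributes: since an inlinable element has its attributes processed at line 29 (inside the inner loop on behalf of its nearest non-inlinable ancestor) while a non-inlinable element has them processed at line 12 (inside the outer loop), every attribute is handled by exactly one of these two loops, so the aggregate attribute cost is exactly $O(N_a)$ rather than being inflated by the tree depth. Given this observation, the rest of the accounting is a routine sum.
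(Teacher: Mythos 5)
Your proof is correct and follows essentially the same route as the paper's: use Lemma 4.2 to bound the outer-loop iterations by the number of non-inlinable elements and Lemma 4.3 to bound the total inner-loop iterations by the number of non-root elements, then sum the per-iteration costs. Your accounting is in fact more careful than the paper's, which simply asserts that all operations in the two loops take constant time; your explicit charging of the attribute loops (lines 12 and 29) to $N_a$ and of the child enqueues (lines 17 and 33) to the $r$-dequeues counted by Lemma 4.3 closes a gap the paper leaves implicit.
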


\begin{proof}
From Lemma 4.2, it follows that the While loop statement in line 6 will be executed for $m1$ times, where $m1$ is the number of non-inlinable elements in $T$. From Lemma 4.3, it follows that the While loop statement in line 18 will be executed for $m2$ times where $m2=n-1$. We have $m1 <= m2 < n$. All the operations involved in those two While loops spend constant amount of time. Hence, it is clear that the XInsert algorithm runs in $O(n)$ time complexity.
\end{proof}

\begin{figure}[ht]
\begin{center}
\includegraphics[scale=.70]{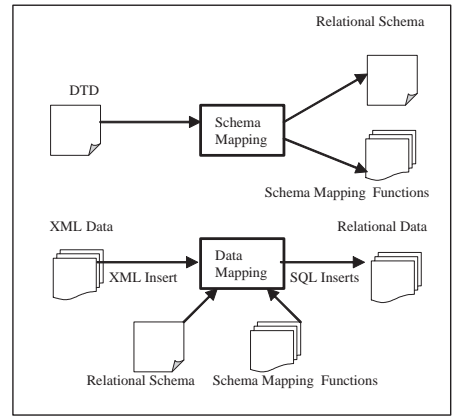}
\caption{System Architecture}
\label{architecture}
\end{center}
\end{figure}

Table \ref{database}  shows how the XML document given in Figure
\ref{instance} is mapped into the relational database given in
Figure \ref{schema} by the data mapping algorithm explained above.

We define the overall mapping process as comprised of schema mapping and data mapping modules. The system architecture is given in Figure \ref{architecture}.
\vspace{-0.15in}
\section{Experimental Results}
We applied the data mapping algorithm introduced in this paper to two different XML schema mapping schemes. The first XML schema mapping scheme that we use in our experiment is DTDMap algorithm and the second scheme is the shared inlining algorithm. These schemes map the XML elements to relational tables based on the operation of inlining child nodes into parent nodes.

We chose a DTD with document-centric features and another one with data-centric features for our experiment to show the performance of our algorithm on documents with different features. We took the first DTD, which is $catalog.dtd$, from XBench ~\cite{e15} XML Benchmark project and generate our test documents. This DTD shows data-centric features. The second DTD, which is $auction.dtd$, was taken from XMark ~\cite{e9} XML Benchmark project and the test documents were generated. The second DTD has document-centric features besides its data-centric features.

We used a Pentium IV computer with 2.4 GHz processor and 512 MB main memory. We ran our implementation using Java 1.4.1 software development kit. We maintained DOM element tree using W3C's DOM specification and processed it using an on-shelf DOM API.

We generated test documents in five different sizes from 10 MB to 50 MB. Then we applied test runs for each size of test documents for both DTDs using both mapping schemes. We created flat comma-separated text files for each table of the corresponding relational schema. Our performance metric is the time spent to map XML data to relational data. Loading data to the database is not included in this time. In order to see the pure performance of our data mapping algorithm, we did not populate a database directly.

We minimized the usage of system resources during the experiments to get more realistic spent time values. On the other hand we repeated every experiment for five times and got the mean value of spent time to obtain more accurate results.Table \ref{results} shows the time spent for data mapping in seconds.

\vspace{-0.2in}

\begin{table}[ht]
\caption{The time spent for XInsert data mapping algorithm}
\begin{center}
\includegraphics[scale=0.60]{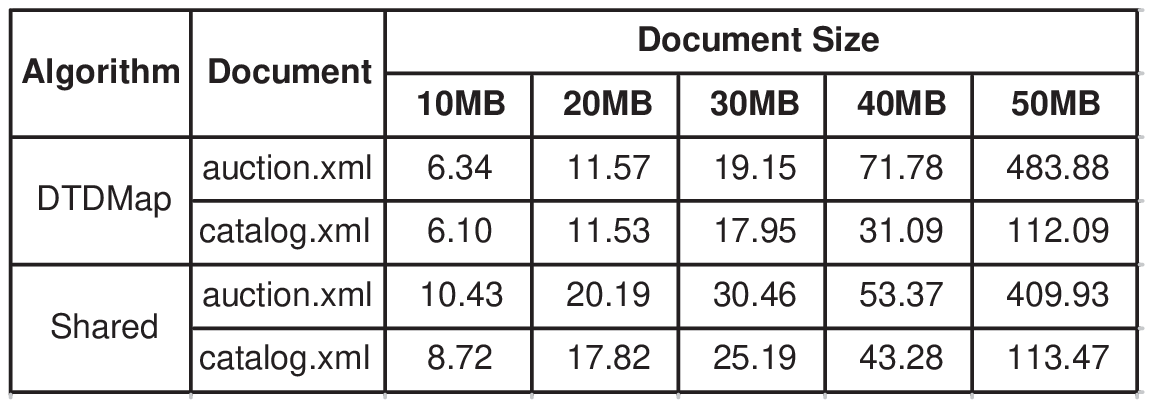}
\label{results}
\end{center}
\end{table}

\vspace{-0.2in}

We observe from the above table that both mapping schemes for both DTDs make a pick after 40 MB. Document Object Model loads the whole document tree into the main memory and then make the whole tree traversal available. When the document gets larger, it hardly fits into the memory. Then system stores some part of the tree in the disk and starts to come back and forth between the disk and the main memory which causes the increase in time.

We see that processing $auction.xml$ document takes more time than processing $catalog.xml$ document on average. We see the difference more precisely especially at 50 MB for both mapping schemes. The $auction.xml$ document includes document-centric textual elements which are nested recursively and cause DOMTree to be deeper where there is no recursive element in $catalog.xml$.
\vspace{-0.15in}
\section{Conclusions}
Several algorithms have been proposed for schema mapping by
researchers, but the problem of data mapping has not been discussed
thoroughly in the literature. In our study, we have addressed the
problem of data mapping and defined an efficient and linear
algorithm for mapping XML data to relational data. Our algorithm
populates a relational database with the input XML documents,
according to the relational schema generated by the schema mapping
phase. This algorithm can be easily adapted to other mapping schemes
based on inlining technique.

Ordered nature of XML elements are ignored in this study. We have not dealt with the referential and integrity constraints in our data mapping algorithm. These issues need to be investigated as a potential future work.
\bibliographystyle{latex8}
\vspace{-0.15in}
\bibliography{atay04}
\end{document}